\documentclass[12pt,a4paper]{article}

\usepackage{amsmath,amsthm,amssymb,amscd,a4wide}

\usepackage{dsfont}
\usepackage{mathrsfs}
\usepackage{setspace}
\usepackage{hyperref}

\newcommand{\ud}{\mathrm{d}}

\newcommand{\cH}{{\mathcal H}}

\DeclareMathOperator{\tr}{Tr}

\newcommand{\eins}{\mathds{1}}

\numberwithin{equation}{section}

\newtheorem{theorem}{Theorem}[section]
\newtheorem{lemma}[theorem]{Lemma}
\newtheorem{prop}[theorem]{Proposition}
\newtheorem{cor}[theorem]{Corollary}

\theoremstyle{definition}

\newtheorem{rem}[theorem]{Remark}


\numberwithin{equation}{section}

\begin{document}

\thispagestyle{empty}

\vspace*{1cm}

\begin{center}

{\LARGE\bf On the impact of surface defects on a condensate of electron pairs in a quantum wire} \\

\vspace*{2cm}

{\large Joachim Kerner \footnote{E-mail address: {\tt Joachim.Kerner@fernuni-hagen.de}} }%

\vspace*{5mm}

Department of Mathematics and Computer Science\\
FernUniversit\"{a}t in Hagen\\
58084 Hagen\\
Germany\\

\end{center},

\vfill
	
	\begin{abstract}In this paper we are interested in understanding the impact of surface defects on a condensate of electron pairs in a quantum wire. Based on previous results we establish a simple mathematical model in order to account for such surface effects. For a system of non-interacting pairs, we will prove the destruction of the condensate in the bulk. Finally, taking repulsive interactions between the pairs into account, we will show that the condensate is recovered for pair densities larger than a critical one given the number of the surface defects is not too large.
	\end{abstract}

	
	
	\section{Introduction}
	In this paper we are interested in establishing a mathematical model which allows us to understand the impact of surface defects on a (Bose-Einstein) condensate of electron pairs in a simple quantum wire, namely the half-line $\mathbb{R}_+=[0,\infty)$. It is motivated by the seminal work of Bardeen, Schrieffer and Cooper~\cite{CooperBoundElectron,BCSI} which demonstrated that the superconducting phase in (type-I) superconductors results from a coherent behaviour of pairs of electrons (Cooper pairs) similar to the one occurring in Bose gases (Bose-Einstein condensation) \cite{MR04}. Discovered first by Onnes, the most striking experimental feature of superconductors is a vanishing of the electrical resistance below some critical temperature \cite{Onnes1991}. 
	
	In a superconductor, the formation of a Cooper pair is a result of the interaction of two electrons with the lattice constituting the solid (electron-phonon-electron interaction). Due to the negative binding energy of each formed pair, the many-particle ground state of the superconductor (which itself is formed of pairs only) is separated from the excited states by a finite energy gap $\Delta > 0$ \cite{BCSI,MR04}. This energy gap, on the other hand, is one of the most important features that distinguishes the superconducting from the normal conducting phase. It is therefore not surprising that explaining the formation of such a gap was the main objective of Coopers ground-breaking work \cite{CooperBoundElectron}.
	
	Superconductivity as described above is a bulk phenomenon. However, in solid state physics it has long become clear that surface effects play an important role in various situations \cite{GennesBoundaryEffects} and, in particular, affect the superconducting behaviour of metals \cite{BarrettSurfaceEffects,TransportVinokur,KashiwayaSurfaceEffects,Belevtsov2003,TheOxfordHandbookSmallSuperconductors}. Hence, starting from~\cite{KernerElectronPairs} (see also~\cite{KernerJMPInteractingPairs}) where rigorous results regarding a condensation of electron pairs in a simple quantum wire were obtained, we will construct a simple mathematical model that allows to incorporate surface defects which are small compared to the bulk. After setting up the model we will investigate condensation of pairs in the bulk: In a first result we show that no (eigen-) state in the bulk remains macroscopically occupied after taking the surface defects into account. Hence, from a physical point of view, it becomes favourable for the pairs to accumulate in the surface defects. In a second step we then introduce repulsive interactions between the pairs and establish the existence of different regimes, one in which condensation in the bulk prevails and one in which it does not. Most importantly, given the number of the surface defects is not too large, there exists a critical pair density such that the pairs condense in the bulk for densities larger than this critical one.
	
	\section{Formulation of the model}
	We consider the quantum wire which is modelled by the half-line $\mathbb{R}_+=[0,\infty)$. On this quantum wire we place, as in~\cite{KernerElectronPairs}, a system of two interacting electrons (with same spin) whose Hamiltonian shall formally be given by
	\begin{equation}\label{HamiltonianPair}
	H_p=-\frac{\partial^2}{\partial x^2}-\frac{\partial^2}{\partial y^2}+v_{b}(|x-y|)
	\end{equation}
	with a binding-potential $v_{b}:\mathbb{R}_+ \rightarrow \overline{\mathbb{R}}_+$ defined as 
	\begin{equation}
	v_{b}(x):=\begin{cases}
	0 \quad \text{if} \quad 0\leq x \leq d\ , \\
	\infty \quad \text{else}\ .
	\end{cases}
	\end{equation}
	Due to the binding potential, the two electrons form a pair whose spatial extension is characterised by the parameter $d > 0$. We refer to~\cite{KernerElectronPairs} where a mathematically rigorous realisation of \eqref{HamiltonianPair} was obtained via the construction of a suitable quadratic form on 
	\begin{equation*}
	L^2_{a}(\Omega):=\{\varphi \in L^2(\Omega)|\ \varphi(x,y)=-\varphi(y,x)\}
	\end{equation*}
	with $\Omega:=\{(x,y)\in \mathbb{R}^2_+|\ |x-y|\leq d  \}$ being the two-particle configuration space.
	
	Now, in order to incorporate (localised) surface effects we extend our Hilbert space. More explicitly, we shall be working on the direct sum
	\begin{equation}\label{OnePairHS}
	\cH=L^2_{a}(\Omega) \oplus \ell^2(\mathbb{N})
	\end{equation}
	which means that we couple the (continuous) quantum wire to a discrete graph which is supposed to model surface defects. From a physical point of view this seems reasonable in a regime where the surface defects are relatively small compared to the bulk.
	
	%
	%
	
	Furthermore, the Hamiltonian of a free pair (meaning without surface-bulk interactions) shall be given by
	\begin{equation}\label{FreeOperator}
	H_0=H_p \oplus \mathcal{L}(\gamma)\ ,
	\end{equation}
	$\mathcal{L}(\gamma)$ being the (weighted) graph Laplacian, i.e., $f\in \ell^2(\mathbb{N})$,
	\begin{equation}
	(\mathcal{L}(\gamma)f)(n):=\sum_{m}\gamma_{nm}\left(f(m)-f(n)\right)
	\end{equation}
	where $\gamma:=(\gamma_{n,m}  \in \mathbb{R}_+)=\gamma^T$ is the associated edge weight matrix~\cite{ChungSpectralGraphTheory}. Since our graph is actually assumed to be a path graph (or chain graph), one sets $\gamma_{mn}=\delta_{|n-m|,1}e_n$ with $(e_n)_{n \in \mathbb{N}} \subset \mathbb{R}_+$. 
	%
	%
	%
	
	%
	\section{On the condensate in the bulk without surface pair interactions}
	In order to study the effect of the surface defects on a condensate of electron pairs we shall investigate the condensation phenomenon similar to~\cite{KernerElectronPairs}. Of course, given one wishes to describe the dynamics of a pair one would like to add a non-diagonal interaction term to~\eqref{FreeOperator} which describes the coupling between the bulk and the surface. However, since we are interested in quantum statistical properties only, we will simplify the discussion in this paper by modelling the interaction as a diagonal operator. The coupling between the surface and the bulk is then realised through the ``heat bath'' \cite{RuelleSM}. More explicitly, we consider the one-pair operator 
	\begin{equation}\label{ReferenceOp}
	H_{\alpha}(\gamma):=H_p \oplus (\mathcal{L}(\gamma)-\alpha \eins)
	\end{equation}
	with $\alpha \geq 0$ some constant characterising the ``surface tension''. Now, since we want to investigate Bose-Einstein condensation of pairs one has to employ a suitable thermodynamic limit \cite{RuelleSM}. For this, the half-line $\mathbb{R}_+$ is replaced by the interval $[0,L]$, $L > 0$, and one considers the restriction 
	\begin{equation}
	H^{L}_{\alpha}(\gamma):=H_p|_{L^2(\Omega_L)} \oplus (\mathcal{L}(\gamma)-\alpha \eins)|_{\mathbb{C}^{n(L)}}
	\end{equation}
	of \eqref{ReferenceOp} defined on 
	\begin{equation}
	\cH_L=L^2(\Omega_L) \oplus \mathbb{C}^{n(L)}
	\end{equation}
	where $\Omega_L:=\{(x,y) \in \Omega|\ 0\leq x,y \leq L \}$ and $n(L) \in \mathbb{N}$ refers to the number of surface defects up to length $L$ of the wire. 
	
	Since $H^{L}_{\alpha}(\gamma)$ is a direct sum of two operators, one has $\sigma(H^{L}_{\alpha}(\gamma))=\sigma(H_p|_{L^2(\Omega_L)}) \cup \sigma((\mathcal{L}(\gamma)-\alpha \eins)|_{\mathbb{C}^{n(L)}})$. As a consequence, $H^{L}_{\alpha}(\gamma)$ has purely discrete spectrum (see~\cite{KernerElectronPairs} for a discussion of $H_p|_{L^2(\Omega_L)}$). In the following, the eigenvalues of $H_p^{L}:=H_p|_{L^2(\Omega_L)}$ shall be denoted by $E_n(L)$ and the corresponding eigenfunctions by $\varphi_n$, $n \in \mathbb{N}_0$. Similarly, the eigenvalues of $(\mathcal{L}(\gamma)-\alpha \eins)|_{\mathbb{C}^{n(L)}}$ by $\lambda_j(L)$ and the associated eigenfunctions by $f_j$, $j=1,...,n(L)$. In both cases, the eigenvalues are counted with multiplicity. 
	%
	%
	%
	
	Now, as a first result we establish the following.
	\begin{prop}\label{PropInfimumSpectrum} For all sequences of edge weights $(e_n)_{n \in \mathbb{N}} \subset \mathbb{R}_+$ one has 
		\begin{equation}
		\inf \sigma(H^{L}_{\alpha}(\gamma))=-\alpha\ .
		\end{equation}
		Furthermore, $E_0(L) \geq E_0:=\frac{2\pi^2}{d^2}$ and 
		\begin{equation}
		\lim_{L \rightarrow \infty}E_0(L)=E_0\ .
		\end{equation}
	\end{prop}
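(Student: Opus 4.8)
The plan is to exploit the direct-sum structure recorded just before the statement: since $\sigma(H^L_\alpha(\gamma)) = \sigma(H_p^L) \cup \sigma\big((\mathcal{L}(\gamma)-\alpha\eins)|_{\mathbb{C}^{n(L)}}\big)$, one has $\inf\sigma(H^L_\alpha(\gamma)) = \min\big\{E_0(L),\, \inf\sigma\big((\mathcal{L}(\gamma)-\alpha\eins)|_{\mathbb{C}^{n(L)}}\big)\big\}$, and it suffices to locate the bottom of each summand. I would treat the discrete surface part first. Writing out the quadratic form of the weighted graph Laplacian gives $\langle f,\mathcal{L}(\gamma)f\rangle = \tfrac12\sum_{n,m}\gamma_{nm}|f(n)-f(m)|^2 \geq 0$ (with the non-negative sign convention of \cite{ChungSpectralGraphTheory}), so that the finite restriction $\mathcal{L}(\gamma)|_{\mathbb{C}^{n(L)}}$ is non-negative and annihilates the constant vector. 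Hence, for every sequence of edge weights $(e_n)$, its lowest eigenvalue is $0$, and the lowest eigenvalue of the shifted operator $(\mathcal{L}(\gamma)-\alpha\eins)|_{\mathbb{C}^{n(L)}}$ equals $-\alpha$, realised on the constant eigenvector. This is precisely where the hypothesis that the weights range over $\mathbb{R}_+$ is used: no positivity or connectedness of the chain is needed, only that constants lie in the kernel.

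Granting the lower bound $E_0(L)\geq E_0 = 2\pi^2/d^2$ from the second assertion, the first assertion follows immediately: since $\alpha\geq 0$ we have $-\alpha \leq 0 < E_0 \leq E_0(L)$, whence $\min\{E_0(L),-\alpha\} = -\alpha$ and therefore $\inf\sigma(H^L_\alpha(\gamma)) = -\alpha$.

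It remains to analyse $E_0(L)$. For the lower bound I would pass to centre-of-mass and relative coordinates $R=(x+y)/2$, $r=x-y$, under which $-\partial_x^2-\partial_y^2 = -\tfrac12\partial_R^2 - 2\partial_r^2$ and the antisymmetry $\varphi(x,y)=-\varphi(y,x)$ becomes oddness in $r$; in particular every admissible state vanishes on the diagonal $r=0$. Discarding the non-negative term $\tfrac12\|\partial_R\psi\|^2$ reduces the problem to the one-dimensional estimate $2\|\partial_r\psi\|^2\geq E_0\|\psi\|^2$: for fixed $R$ the slice $\psi(R,\cdot)$ is odd on the symmetric interval $|r|\leq c_R$ with $c_R=\min\{d,2R,2(L-R)\}$, vanishes at $r=0$, and is killed by the binding wall at $r=\pm d$; the bottom of the corresponding Dirichlet problem for $-\partial_r^2$ is $(\pi/c_R)^2\geq(\pi/d)^2$, with equality on the bulk slices $c_R=d$. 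Integrating over $R$ yields $\langle\psi,H_p^L\psi\rangle\geq 2(\pi/d)^2\|\psi\|^2 = E_0\|\psi\|^2$, i.e. $E_0(L)\geq E_0$; equivalently this is Dirichlet domain monotonicity applied to $\Omega_L\subset\Omega$ together with $\inf\sigma(H_p)=E_0$, as established in \cite{KernerElectronPairs}.

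For the limit I would complement this with a variational upper bound. Take $\psi_L(R,r)=\chi_L(R)\sin(\pi r/d)$ with $\chi_L\in C_c^\infty\big((d/2,\,L-d/2)\big)$ normalised and spread over the whole interval so that $\|\chi_L'\|^2=O\big((L-d)^{-2}\big)\to 0$; then $\psi_L$ is antisymmetric, vanishes at $r=\pm d$, and is supported in $\Omega_L$ because $c_R=d$ on the support of $\chi_L$. A short computation gives the Rayleigh quotient $2(\pi/d)^2 + \tfrac12\|\chi_L'\|^2 \to 2\pi^2/d^2 = E_0$, so $E_0(L)\leq E_0+o(1)$, and together with the lower bound this yields $\lim_{L\to\infty}E_0(L)=E_0$. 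I expect the only real difficulty to lie in the two corner regions $R<d/2$ and $R>L-d/2$, where the available $r$-interval shrinks and the boundary conditions inherited from the spatial ends $x,y\in\{0,L\}$ enter; showing that these conditions only raise the slice-wise ground-state energy, so that the bulk value $E_0$ is never undercut, is where I would rely on the quadratic-form construction of $H_p^L$ in \cite{KernerElectronPairs}.
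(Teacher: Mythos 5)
Your proposal is correct in outline and considerably more detailed than the paper's own proof, which for the first identity merely exhibits the constant eigenvector of the graph Laplacian and for the second assertion cites [Lemma~3.1, \emph{KernerElectronPairs}] without reproducing the argument. Your handling of the surface part (non-negativity of the form $\tfrac12\sum\gamma_{nm}|f(n)-f(m)|^2$, constants in the kernel, shift by $-\alpha$) matches the paper, and you make explicit a step the paper leaves tacit: to conclude $\inf\sigma(H^L_\alpha(\gamma))=-\alpha$ from the union of spectra one also needs $E_0(L)\geq E_0>0\geq-\alpha$, so the first identity really does lean on the second assertion. Your variational upper bound $\psi_L(R,r)=\chi_L(R)\sin(\pi r/d)$ is correct (the Jacobian of $(x,y)\mapsto(R,r)$ is $1$, the trial function is antisymmetric and supported where $c_R=d$, and the Rayleigh quotient is $E_0+\tfrac12\|\chi_L'\|^2/\|\chi_L\|^2\to E_0$), so given the lower bound the limit follows.

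The one genuine gap is the one you yourself flag: the slice-wise lower bound. Your statement that the bottom of the slice problem is $(\pi/c_R)^2\geq(\pi/d)^2$ presupposes Dirichlet conditions at $r=\pm c_R$, but for $c_R<d$ those endpoints lie on the boundary portions $x=0$, $y=0$ (or $x=L$, $y=L$), not on the binding wall, and the quadratic-form realisation does not force vanishing there. If those ends carry Neumann-type conditions, the infimum over odd slice functions with $u(0)=0$ and free endpoints is $2\cdot(\pi/2c_R)^2=\pi^2/(2c_R^2)$, which is \emph{strictly below} $E_0=2\pi^2/d^2$ whenever $c_R\in(d/2,d)$; so the naive slicing genuinely fails on a band of corner slices and cannot be repaired merely by "noting that the boundary conditions only raise the energy" --- one must use the discarded $\tfrac12\|\partial_R\psi\|^2$ term or argue differently. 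Since you defer precisely this point to the quadratic-form construction in the cited reference, and the paper defers the entire inequality $E_0(L)\geq E_0$ to the same Lemma~3.1, your proof is no less complete than the paper's; but be aware that the actual mathematical content of that inequality is concentrated in exactly the corner estimate your sketch does not supply.
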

	\begin{proof} The first equation follows directly from the fact that zero is the lowest eigenvalue to the discrete Laplacian associated with the constant eigenfunction $(1,1,1,...,1)^{T} \in \mathbb{C}^{n(L)}$. The second part of the statement was proved in [Lemma~3.1,~\cite{KernerElectronPairs}].
	\end{proof}
	In order to investigate condensation of pairs we will work, as customary in statistical mechanics, in the grand canonical ensemble~\cite{VerbeureBook,RuelleSM}. The associated Gibbs state is
	\begin{equation}\label{GibbsState}
	\omega^L_{\beta,\mu_L}(\ \cdot\ ):=\frac{\tr_{\mathcal{F}_b}(e^{-\beta (\Gamma(H^{L}_{\alpha}(\gamma))-\mu_L \mathrm{N})}[\ \cdot\ ])}{Z(\beta,\mu_L)}\ ,
	\end{equation}
	where $\beta=\frac{1}{T} \in (0,\infty)$ is the inverse temperature, $\mu_L \in (-\infty,\mu_{max}(L))$ the chemical potential (with $\mu_{max}(L)$ specified later) and $Z(\beta,\mu_L)=\tr_{\mathcal{F}_b}(e^{-\beta (\Gamma(H^{L}_{\alpha})-\mu_L \mathrm{N})})$ the partition function. Furthermore, $\mathcal{F}_b$ is the bosonic Fock space over $\mathcal{H}_L$, 
	\begin{equation}
	\mathrm{N}=\sum_{j=1}^{n(L)}a^{\ast}_ja_j+\sum_{n=0}^{\infty}a^{\ast}_na_n
	\end{equation}
	the number operator and 
	\begin{equation}\label{HamiltonianFreeSecondQS}
	\Gamma(H^{L}_\alpha(\gamma))=\sum_{j=1}^{n(L)}(\lambda_j(L)-\alpha)a^{\ast}_ja_j+\sum_{n=0}^{\infty}E_n(L)a^{\ast}_na_n
	\end{equation}
	the second quantisation of $H^{L}_\alpha(\gamma)$, see~\cite{MR04,BEH08} for more details. Note here that $\{a^{\ast}_j,a_j\}$ are the creation and annihilation operators corresponding to the states $\{0 \oplus f_j\}_{j=1}^{n(L)}$ and $\{a^{\ast}_n,a_n\}$ the ones corresponding to the states $\{\varphi_n \oplus 0\}_{n=0}^{\infty}$. 
	%
	%
	
	Most importantly, in the grand-canonical ensemble there is an explicit formula for the number of pairs occupying a given eigenstate~\cite{RuelleSM}: for every state $\varphi_n \oplus 0$ with associated number operator $n_{\varphi_n}:=a^{\ast}_na_n$, the number of pairs occupying this state is 
	\begin{equation}\label{FormulaParticleDensityII}
	\omega^L_{\beta,\mu_L}(n_{\varphi_n})=\frac{1}{e^{\beta(E_n(L)-\mu_L)}-1}\ .
	\end{equation}
	An equivalent formula applies to any element of the form $0 \oplus f_j$, setting  $n_{f_j}:=a^{\ast}_ja_j$.
	
	For the Hamiltonian~\eqref{HamiltonianFreeSecondQS}, the thermodynamic limit shall then be realised as the limit $L \rightarrow \infty$ such that 
	\begin{equation}\label{SequenceCPEinsteinsArgument}
	\rho=\frac{1}{L}\left(\sum_{j=1}^{n(L)} \omega^L_{\beta,\mu_L}(n_{f_j})+\sum_{n=0}^{\infty}  \omega^L_{\beta,\mu_L}(n_{\varphi_n})\right)
	\end{equation}
	holds for all values of $L$ with $\mu_L$ denoting the sequence of the chemical potentials and $\rho > 0$ the pair density. Furthermore, we say that a bulk state $\varphi_n \oplus 0$, $n \in \mathbb{N}_0$, is macroscopically occupied in the thermodynamic limit if 
	\begin{equation}\label{EquationMacroscopicOccupation}
	\limsup_{L \rightarrow \infty}\frac{1}{e^{\beta(E_n(L)-\mu_L)}-1} > 0
	\end{equation}
	holds.

	As shown in~[Theorem~3.3,~\cite{KernerElectronPairs}], the state $\varphi_0 $ is macroscopically occupied in the thermodynamic limit given the underlying Hilbert space is $L^2(\Omega_L)$ only. In contrast to this, we obtain the following result when working on $\cH_L$, i.e., when including the surface defects.
	%
	
	\begin{theorem}[Destruction of the condensate in the bulk I]\label{TheoremDSNOn} Assume that $H^{L}_{\alpha}(\gamma)$ is given with an arbitrary sequence of edge weigths $(e_n)_{n \in \mathbb{N}} \subset \mathbb{R}_+$. Then, for the associated Gibbs state and all pair densities $\rho > 0$, no bulk state $\varphi_n \oplus 0$, $n \in \mathbb{N}_0$, is macroscopically occupied in the thermodynamic limit. Actually, one has 
		\begin{equation*}
		\lim_{L \rightarrow \infty}\frac{1}{L}\frac{1}{e^{\beta(E_n(L)-\mu_L)}-1}=0\ , \quad \forall n \in \mathbb{N}_0\ .
		\end{equation*}
	\end{theorem}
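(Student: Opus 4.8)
The plan is to exploit the spectral gap between the surface ground state at energy $-\alpha$ and the bulk spectrum, which by Proposition~\ref{PropInfimumSpectrum} starts at $E_0(L) \geq E_0 = 2\pi^2/d^2 > 0 > -\alpha$ for $\alpha \geq 0$. The whole argument hinges on a single inequality: that the chemical potential $\mu_L$ stays uniformly below $-\alpha$. Once this is secured, every bulk occupation number is bounded by an $L$-independent constant, and dividing by $L$ forces the stated limit to vanish. In particular one never needs to analyse the density constraint \eqref{SequenceCPEinsteinsArgument} in detail; only the upper bound $\mu_L < -\alpha$ is used.

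First I would pin down $\mu_{max}(L)$. Since the Gibbs state \eqref{GibbsState} and the occupation formula \eqref{FormulaParticleDensityII} only make sense when every factor $e^{\beta(E-\mu_L)}-1$ is strictly positive, the chemical potential must lie strictly below $\inf\sigma(H^{L}_{\alpha}(\gamma))$; by Proposition~\ref{PropInfimumSpectrum} this gives $\mu_L < -\alpha$, so that $\mu_{max}(L)=-\alpha$. I would also record that this value is genuinely the admissible supremum: as $\mu_L \uparrow -\alpha$ the occupation $1/(e^{\beta(-\alpha-\mu_L)}-1)$ of the surface ground state (the constant eigenfunction of the graph Laplacian) diverges, so the right-hand side of \eqref{SequenceCPEinsteinsArgument} sweeps out all of $(0,\infty)$ and the prescribed density $\rho>0$ is met by some $\mu_L \in (-\infty,-\alpha)$ for every $L$. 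This is the step that truly uses the surface: it is the surface ground state, lying below the bulk, that absorbs an arbitrary density and thereby prevents $\mu_L$ from ever approaching $E_0$.

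With $\mu_L < -\alpha$ in hand, the conclusion is immediate. For every $n \in \mathbb{N}_0$ and every $L$, using $E_n(L) \geq E_0(L) \geq E_0$ from Proposition~\ref{PropInfimumSpectrum},
\[
E_n(L)-\mu_L > E_0-(-\alpha) = E_0+\alpha > 0,
\]
so that
\[
0 < \frac{1}{e^{\beta(E_n(L)-\mu_L)}-1} < \frac{1}{e^{\beta(E_0+\alpha)}-1} =: C,
\]
a constant independent of both $n$ and $L$. Dividing by $L$ and letting $L \to \infty$ then yields
\[
\lim_{L \to \infty}\frac{1}{L}\,\frac{1}{e^{\beta(E_n(L)-\mu_L)}-1}=0,
\]
which in particular shows that no bulk state carries an extensive (macroscopic) fraction of the pairs.

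I do not expect a serious obstacle: the statement is essentially a soft consequence of the ordering $\mu_L < -\alpha < E_0$. The only point demanding a little care is the justification that $\mu_{max}(L)=-\alpha$ for all finite $L$ — that a solution $\mu_L$ of the density constraint exists and remains strictly below $-\alpha$ — which rests on the strict positivity and monotonicity of the total occupation as a function of $\mu_L$, together with the divergence contributed by the surface ground state as $\mu_L \uparrow -\alpha$. Everything after that is the uniform bound by $C$.
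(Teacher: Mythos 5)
Your proposal is correct and follows essentially the same route as the paper: identify $\mu_{\max}(L)=\inf\sigma(H^{L}_{\alpha}(\gamma))=-\alpha$ via Proposition~\ref{PropInfimumSpectrum}, conclude $\mu_L<-\alpha$, and then use $E_n(L)\geq E_0>0$ to bound each bulk occupation number by an $L$-independent constant so that division by $L$ kills the limit. The only difference is that you spell out the uniform bound and the attainability of the density constraint explicitly, which the paper leaves implicit.
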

	\begin{proof} Since $\mu_{max}(L)=\inf \sigma(H^{L}_{\alpha}(\gamma))$ in the non-interacting case \cite{RuelleSM}, Proposition~\ref{PropInfimumSpectrum} implies that $\mu_L \in (-\infty,-\alpha)$. The result then readily follows from \eqref{EquationMacroscopicOccupation} taking Proposition~\ref{PropInfimumSpectrum} into account.
	\end{proof}
	\section{On the condensate in the bulk in the presence of surface pair interactions}
	In the previous section we have seen, by Theorem~\ref{TheoremDSNOn}, that the condensate of electron pairs in the bulk is destroyed through the presence of surface defects. However, since the defects are imagined relatively small when compared to the bulk, (repulsive) interactions between the pairs should be taken into account for large pair surface densities. 
	
	In order to account for those interactions, we pursue a (quasi) mean-field approach. More explicitly, the first term on the right-hand side of \eqref{HamiltonianFreeSecondQS} (i.e., the free Hamiltonian associated with the discrete graph) shall be replaced by
	\begin{equation}\label{MeanFieldHamiltonian}
	\sum_{j=1}^{n(L)}\left(\lambda_{j}(L)-\alpha+\lambda \rho_{s}(\mu_L,L) \right)a^{\ast}_ja_j:=h_L(\alpha,\lambda)\ ,
	\end{equation}
	where $\rho_{s}(\mu_L,L) \geq 0$ is the pair density in the surface defects, i.e., on $\mathbb{C}^{n(L)}$; see eq.~\eqref{SurfaceDensityII} below. Furthermore, $\lambda > 0$ is the interaction strength associated with the repulsive interactions between the pairs. 
	%
	\begin{rem} Note that the interaction term in the standard mean-field approach is $\lambda \frac{\mathrm{N}^2}{V}$ with $V$ the associated volume, see~\cite{MichVerbeure,VerbeureBook}.
	\end{rem}
	Since we can write $h_L(\alpha,\lambda)$ as in~\eqref{MeanFieldHamiltonian} we conclude that  the eigenvalues $\lambda_j(L)$ are effectively only shifted by $\lambda\rho_{s}(\mu,L)-\alpha$. Accordingly, the problem thereby reduces to an effective non-interacting ``particle'' model and one has
	\begin{equation}\label{RelationChemicalPotential}
	\mu_{L} < \min\{\lambda\rho_{s}(\mu_L,L)-\alpha,E_0(L)\}
	\end{equation}
	for the sequence of chemical potentials $\mu_L$~\cite{RuelleSM}, taking into account that the lowest eigenvalue of the Laplacian is zero. In particular, $\mu_{max}(L)=\min\{\lambda\rho_{s}(\mu_L,L)-\alpha,E_0(L)\}$. Furthermore, $\mu_{L}$ and the surface pair density $\rho_{s}(\mu_L,L)$ shall be chosen in a way such that
	\begin{equation}\label{EquationI}
	\rho=\lim_{k \rightarrow \infty}\frac{1}{L_k}\left(\sum_{j=1}^{n(L_k)}\frac{1}{e^{\beta\left[\left(\lambda_{j}(L_k)-\alpha+\lambda \rho_{s}(\mu_{L_{k}},L_k) \right)-\mu_{L_k}\right]}-1}+\sum_{n=0}^{\infty}\frac{1}{e^{\beta\left(E_n(L_k) -\mu_{L_k}\right)}-1}\right)\ ,
	\end{equation}
	for a subsequence $\mu_{L_k}$ together with
	\begin{equation}\label{SurfaceDensityII}
	\rho_{s}(\mu_{L},L)=\frac{1}{n(L)}\sum_{j=1}^{n(L)}\frac{1}{e^{\beta\left[\left(\lambda_{j}(L)-\alpha+\lambda \rho_{s}(\mu_L,L) \right)-\mu_{L}\right]}-1}\ .
	\end{equation}
	%
	%
	%
	In the rest of the section we shall assume that $L/n(L)$ is bounded from above and that $\mu_{L_k}$ converges to a (possibly negative 
	infinite) limit value $\mu \leq E_0$ (from eq.~\eqref{EquationI} and eq.~\eqref{SurfaceDensityII} we indeed conclude that there are values and, in particular, arbitrarily large/small values of $\rho > 0$ for which such sequences exist). Note that, for notational simplicity we will, in the following, restrict ourselves to subsequences without further notice.
	\begin{theorem} Let $\mu_L \in (-\infty,E_0(L))$ be a corresponding sequence of chemical potentials with limit value $\mu \leq E_0$. Then 
		\begin{equation}\label{EquationTheorem}
		\lim_{L \rightarrow \infty}\left(\frac{n(L)}{L}\rho_s(\mu_L,L)+\rho_0(\mu_L,L)\right)=\rho-\frac{\sqrt{2}}{\pi}\sum_{n=1}^{\infty}\int_{0}^{\infty}\frac{1}{e^{\beta \frac{2\pi^2n^2}{d^2}}e^{\beta(x^2-\mu)}-1}\ \ud x\ ,
		\end{equation}
		where $\rho_0(\mu_L,L):=\omega^L_{\beta,\mu_L}(n_{\varphi_0})/L$.
	\end{theorem}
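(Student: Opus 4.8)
The plan is to read the statement directly off the density constraint \eqref{EquationI} and to reduce everything to a single thermodynamic-limit computation for the excited bulk states. First I would rewrite \eqref{EquationI} by observing that, thanks to the defining relation \eqref{SurfaceDensityII}, the surface sum is exactly $\frac{n(L)}{L}\rho_s(\mu_L,L)$, and by splitting off the $n=0$ term of the bulk sum, which by definition equals $\rho_0(\mu_L,L)$. This recasts \eqref{EquationI} as
\begin{equation*}
\rho=\lim_{L\to\infty}\left(\frac{n(L)}{L}\rho_s(\mu_L,L)+\rho_0(\mu_L,L)\right)+\lim_{L\to\infty}\frac{1}{L}\sum_{n=1}^{\infty}\frac{1}{e^{\beta(E_n(L)-\mu_L)}-1},
\end{equation*}
provided the two limits exist separately. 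Hence the theorem is equivalent to the single identity
\begin{equation*}
\lim_{L\to\infty}\frac{1}{L}\sum_{n=1}^{\infty}\frac{1}{e^{\beta(E_n(L)-\mu_L)}-1}=\frac{\sqrt{2}}{\pi}\sum_{m=1}^{\infty}\int_{0}^{\infty}\frac{1}{e^{\beta\frac{2\pi^2m^2}{d^2}}e^{\beta(x^2-\mu)}-1}\,\mathrm{d}x,
\end{equation*}
which the remaining steps must establish (the dummy index $m$ plays the role of $n$ on the right-hand side of \eqref{EquationTheorem}).

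The core is a spectral analysis of $H_p^{L}$. Passing to centre-of-mass and relative coordinates $X=(x+y)/2$, $u=x-y$ turns $-\partial_x^2-\partial_y^2$ into $-\tfrac12\partial_X^2-2\partial_u^2$, so that (away from the $O(d)$ corner near the origin, which is negligible after division by $L$) the operator separates: the relative motion is a Dirichlet problem on $[-d,d]$ restricted to odd functions, with eigenvalues $\epsilon_m=\frac{2\pi^2m^2}{d^2}$, $m\in\mathbb{N}$, while the longitudinal motion is $-\tfrac12\partial_X^2$ on an interval of length $\approx L$, with quantised momenta $p_k=k\pi/L$. The lowest eigenvalue is $\epsilon_1=E_0$, consistent with Proposition~\ref{PropInfimumSpectrum}, and the bulk spectrum is asymptotically $E=\tfrac12 p^2+\epsilon_m$. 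I would import this separation together with the eigenvalue asymptotics from \cite{KernerElectronPairs}. With it, $\frac{1}{L}\sum_{n\ge1}$ becomes, for each band $m$, a Riemann sum over the $p_k$: the level spacing $\Delta p=\pi/L$ supplies the prefactor $\frac1\pi$, and the substitution $p=\sqrt{2}\,x$ (so that $\tfrac12 p^2=x^2$ and $\mathrm{d}p=\sqrt{2}\,\mathrm{d}x$) produces precisely the factor $\frac{\sqrt{2}}{\pi}$ and the integrand $\bigl(e^{\beta\epsilon_m}e^{\beta(x^2-\mu)}-1\bigr)^{-1}$ of the claimed limit.

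The main obstacle is justifying the interchange of the limit with the sum over bands and the momentum integral, uniformly in the regime $\mu_L\to\mu$. For $\mu<E_0$ there is a strictly positive effective gap $E_0-\mu>0$ in the first band, so the $m=1$ integral is finite (its $x\to0$ singularity is integrable because $x^2+\epsilon_1-\mu\ge E_0-\mu>0$), one has $\rho_0\to0$, and I would control the Riemann-sum error and dominate the band sum by a convergent majorant to pass to the limit by dominated convergence. The delicate point is the band edge $p\to0$ of the first band as $\mu_L\uparrow E_0$: here the removal of the single state $\varphi_0$ into $\rho_0$ is exactly what isolates the only genuinely singular mode, and one must verify that the remaining low-lying longitudinal modes of the first band are faithfully reproduced by the (still convergent, for $\mu<E_0$) integral rather than producing an anomalous macroscopic contribution. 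The boundary value $\mu=E_0$ has to be treated on its own, since then the $m=1$ integral diverges; I would argue that in this case the excited-state sum diverges as well, so that the identity holds in the extended sense and, in particular, that $\mu=E_0$ cannot arise for finite $\rho$. Finally, the non-product geometry of $\Omega_L$ forces a check that the corner corrections to the separation of variables are of $L$-independent size and therefore vanish after division by $L$.
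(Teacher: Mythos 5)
Your proposal is correct and follows essentially the same route as the paper: the paper's proof consists precisely of the rearrangement of \eqref{EquationI} that you describe (identifying the surface sum with $\tfrac{n(L)}{L}\rho_s(\mu_L,L)$ via \eqref{SurfaceDensityII} and splitting off the $n=0$ bulk term), followed by an appeal to formula~(3.4) of \cite{KernerElectronPairs} for the thermodynamic limit of the excited bulk sum, with $\hbar=1$, $m_e=1/2$ and $d$ replaced by $d/\sqrt{2}$. The separation-of-variables and Riemann-sum computation you sketch (including the substitution $p=\sqrt{2}\,x$ producing the factor $\sqrt{2}/\pi$) is exactly the content of that cited formula, so you are re-deriving the citation rather than departing from it.
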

	\begin{proof} Starting from~\eqref{EquationI}, the statement follows directly from formula~(3.4) of~\cite{KernerElectronPairs} setting $\hbar=1$, $m_e=1/2$ and replacing $d$ by $d/\sqrt{2}$.
		
	\end{proof}
	Writing $\lim_{L\rightarrow \infty}\frac{L}{n(L)}=:\delta$ with $0\leq \delta < \infty$ then, setting $\rho_0:=\lim_{L \rightarrow \infty}\rho_0(\mu_L,L)$,
	\begin{equation}\label{EquationLimitDensity}\begin{split}
	\lim_{L\rightarrow \infty}\rho_s(\mu_L,L)& = \delta\left(\rho-\frac{\sqrt{2}}{\pi}\sum_{n=1}^{\infty}\int_{0}^{\infty}\frac{1}{e^{\beta \frac{2\pi^2n^2}{d^2}}e^{\beta(x^2-\mu)}-1}\ud x\right)-\delta \rho_0 \\
	&=:\tilde{\rho}(\mu,\delta)-\delta \rho_0 \ ,
	\end{split}
	\end{equation}
	for a limit value $\mu \leq E_0$. From a physical point of view it is also interesting to write 
	\begin{equation}
	\tilde{\rho}(\mu,\delta)=\delta(\rho-\rho_{exc})\ ,
	\end{equation}
	where $\rho_{exc}=\rho_{exc}(\beta,\mu)$ equals the second term within the brackets in eq.~\eqref{EquationLimitDensity}. Note that $\rho_{exc}$ is the density of pairs occupying all excited (eigen-)states (i.e., $\varphi_n \oplus 0$ with $n\geq 1$) in the bulk in the thermodynamic limit.
	\begin{lemma}[Destruction of the condensate in the bulk II]\label{ProofLemma}  Assume that $\lambda > 0$ and
		\begin{equation}\label{EquationConditionTheorem}
		\tilde{\rho}(\mu,\delta) < \frac{E_0+\alpha}{\nu\lambda}
		\end{equation}
		for some $\nu >1$ and $\mu$ the limit point of $\mu_L$. Then
		\begin{equation}\label{EquationLimitZero}
		\lim_{L \rightarrow \infty}\frac{1}{L}\frac{1}{e^{\beta(E_n(L)-\mu_L)}-1}=0\ , \quad \forall n \in \mathbb{N}_0\ .
		\end{equation}
	\end{lemma}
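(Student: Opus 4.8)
The plan is to locate the limit point $\mu$ of the chemical potentials strictly below the bulk ground-state energy $E_0$; once this is established, the asserted vanishing follows from a simple domination estimate. The key observation is that the repulsive mean-field shift $\lambda\rho_s(\mu_L,L)$ raises the effective ground-state energy of the surface part, and condition \eqref{EquationConditionTheorem} is precisely calibrated so that this shift is too weak to push $\mu$ up to $E_0$.

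First I would exploit the constraint \eqref{RelationChemicalPotential}, which gives $\mu_L < \lambda\rho_s(\mu_L,L)-\alpha$ for every $L$, hence $\rho_s(\mu_L,L) > (\mu_L+\alpha)/\lambda$. Passing to the convergent subsequence, using $\mu_L \to \mu$ together with the existence of the limit recorded in \eqref{EquationLimitDensity}, the strict inequality degrades to the weak bound $\lim_{L \to \infty}\rho_s(\mu_L,L) \geq (\mu+\alpha)/\lambda$. Feeding in the surface-density identity $\lim_{L \to \infty}\rho_s(\mu_L,L) = \tilde\rho(\mu,\delta) - \delta\rho_0$ from \eqref{EquationLimitDensity} and discarding the manifestly non-negative term $\delta\rho_0$ (recall $\delta,\rho_0 \geq 0$) yields $\tilde\rho(\mu,\delta) \geq (\mu+\alpha)/\lambda$. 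Combining this with the hypothesis \eqref{EquationConditionTheorem} gives
\begin{equation*}
\frac{\mu+\alpha}{\lambda} \leq \tilde\rho(\mu,\delta) < \frac{E_0+\alpha}{\nu\lambda}\ ,
\end{equation*}
so that $\mu+\alpha < (E_0+\alpha)/\nu$. Since $\nu > 1$ and $E_0+\alpha > 0$, this forces $\mu < E_0$ strictly; the factor $\nu > 1$ is exactly what guarantees this strict separation.

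With $\mu < E_0$ secured I would conclude by domination. By Proposition~\ref{PropInfimumSpectrum} one has $E_0(L) \geq E_0$, and since $\mu_L \to \mu$, for all sufficiently large $L$ it follows that $E_0(L) - \mu_L \geq E_0 - \mu_L \geq (E_0-\mu)/2 =: c > 0$. Using $E_n(L) \geq E_0(L)$ for every $n$ and the monotonicity of $t \mapsto (e^{\beta t}-1)^{-1}$,
\begin{equation*}
0 \leq \frac{1}{L}\frac{1}{e^{\beta(E_n(L)-\mu_L)}-1} \leq \frac{1}{L}\frac{1}{e^{\beta c}-1} \xrightarrow{L \to \infty} 0
\end{equation*}
uniformly in $n$, which is precisely \eqref{EquationLimitZero}. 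Note that the $n=0$ instance also shows \emph{a posteriori} that $\rho_0 = 0$, consistent with the bound used above.

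The main obstacle is the bookkeeping in the first two steps rather than any deep estimate: one must track that the strict inequality from \eqref{RelationChemicalPotential} survives the thermodynamic limit only as a weak inequality, and verify that dropping the non-negative term $\delta\rho_0$ does not create a circular dependence on the very quantity $\rho_0$ one is ultimately trying to control. Since $\rho_0 \geq 0$ is used as an \emph{inequality} to bound $\tilde\rho(\mu,\delta)$ from above, and the conclusion $\mu < E_0$ is reached independently of the precise value of $\rho_0$, no circularity arises. Once $\mu < E_0$ is in hand, the closing estimate is entirely routine.
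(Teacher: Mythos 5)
Your proof is correct and takes essentially the same approach as the paper's: the paper likewise deduces $\mu < E_0-\varepsilon$ from relation~\eqref{RelationChemicalPotential} together with hypothesis~\eqref{EquationConditionTheorem}, and then concludes via Proposition~\ref{PropInfimumSpectrum} and the observation that the bulk ground state is the most occupied. Your write-up merely makes explicit the chain $(\mu+\alpha)/\lambda \leq \tilde{\rho}(\mu,\delta) < (E_0+\alpha)/(\nu\lambda)$ (and the resulting quantitative gap $E_0-\mu \geq (E_0+\alpha)(1-1/\nu)>0$) that the paper leaves implicit.
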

	\begin{proof} By relation~\eqref{RelationChemicalPotential} and the assumptions we conclude that
		\begin{equation}
		\mu <  E_0-\varepsilon\ 
		\end{equation}
		for the limit point of $\mu_L$ and for some constant $\varepsilon > 0$. Consequently, by Proposition~\ref{PropInfimumSpectrum} we get
		\begin{equation}
		\lim_{L \rightarrow \infty}\frac{1}{L}\frac{1}{e^{\beta(E_0(L)-\mu_L)}-1}=0\ , \quad \forall n \in \mathbb{N}_0\ ,
		\end{equation}
		which yields the statement since the bulk ground state is occupied the most.
	\end{proof}
	We immediately obtain the following corollary which is particularly interesting from a physical point of view.
	\begin{cor}\label{CorollarySurface} Assume that $\delta=0$. Then, for all values $\lambda > 0$, \eqref{EquationLimitZero} holds for the states $\varphi_n \oplus 0$, $n \in \mathbb{N}_0$. 
		
		More generally, if 
		\begin{equation}
		\delta \cdot \rho < \frac{E_0+\alpha}{\nu\lambda}
		\end{equation}
		for some $\nu > 1$ and $\lambda > 0$, then~\eqref{EquationLimitZero} holds for the states $\varphi_n \oplus 0$, $n \in \mathbb{N}_0$. 
	\end{cor}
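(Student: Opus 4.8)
The plan is to read the corollary off directly from Lemma~\ref{ProofLemma}, the only extra input being an elementary two-sided estimate on $\tilde{\rho}(\mu,\delta)$. Recall from \eqref{EquationLimitDensity} that $\tilde{\rho}(\mu,\delta)=\delta(\rho-\rho_{exc})$, where the excited-state density $\rho_{exc}=\rho_{exc}(\beta,\mu)$ is manifestly non-negative, being an integral of non-negative Bose occupation factors. Since $\delta\geq 0$, this yields the upper bound $\tilde{\rho}(\mu,\delta)\leq\delta\rho$. At the same time $\tilde{\rho}(\mu,\delta)\geq 0$, because by \eqref{EquationLimitDensity} it equals $\lim_{L\rightarrow\infty}\rho_s(\mu_L,L)+\delta\rho_0$, a sum of non-negative quantities (both the surface density $\rho_s$, which is positive by \eqref{SurfaceDensityII} together with \eqref{RelationChemicalPotential}, and $\delta\rho_0$ are $\geq 0$). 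These two observations are all that is required.

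For the general statement I would argue as follows. Assuming $\delta\rho<\frac{E_0+\alpha}{\nu\lambda}$ for some $\nu>1$ and $\lambda>0$, the upper bound above gives
\begin{equation*}
\tilde{\rho}(\mu,\delta)\leq\delta\rho<\frac{E_0+\alpha}{\nu\lambda}\ ,
\end{equation*}
so that hypothesis~\eqref{EquationConditionTheorem} of Lemma~\ref{ProofLemma} holds with the very same $\nu$ and $\lambda$. The conclusion~\eqref{EquationLimitZero} for the states $\varphi_n\oplus 0$ then follows verbatim from the lemma.

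The case $\delta=0$ I would treat as a degenerate instance of the same argument, and it is here that a little care is warranted. Formally one expects $\tilde{\rho}(\mu,0)=0$, but since $\rho_{exc}$ may diverge precisely when the limit value satisfies $\mu=E_0$, the product $0\cdot(\rho-\rho_{exc})$ is ambiguous as written. This is circumvented by the two-sided estimate $0\leq\tilde{\rho}(\mu,0)\leq\delta\rho=0$, which forces $\tilde{\rho}(\mu,0)=0$ independently of the value of $\rho_{exc}$. Since $E_0=\frac{2\pi^2}{d^2}>0$ and $\alpha\geq 0$, the bound $\frac{E_0+\alpha}{\nu\lambda}$ is strictly positive for every $\nu>1$ and every $\lambda>0$; hence $\tilde{\rho}(\mu,0)=0<\frac{E_0+\alpha}{\nu\lambda}$ and Lemma~\ref{ProofLemma} applies, giving \eqref{EquationLimitZero} for all $\lambda>0$, as claimed.

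In short, the corollary is essentially immediate once Lemma~\ref{ProofLemma} is available, and there is no genuine obstacle to overcome. The only point deserving attention is precisely the one just flagged: one must justify the positivity of $\rho_{exc}$ (for the upper bound) and the non-negativity of $\tilde{\rho}(\mu,\delta)$ (for the degenerate case) rather than assume them, so that the possible divergence of $\rho_{exc}$ at $\mu=E_0$ does not open a gap in the $\delta=0$ argument.
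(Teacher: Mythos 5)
Your proposal is correct and follows exactly the route the paper intends: the paper gives no explicit proof, presenting the corollary as an immediate consequence of Lemma~\ref{ProofLemma} via the bound $\tilde{\rho}(\mu,\delta)=\delta(\rho-\rho_{exc})\leq\delta\rho$, which is precisely your argument. Your extra care about the non-negativity of $\rho_{exc}$ and the $\delta=0$ case is a welcome but minor elaboration of the same reasoning.
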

	\begin{rem} Corollary~\ref{CorollarySurface} implies that the condensate of electron pairs in the bulk (which exists due to [Theorem~3.3,~\cite{KernerElectronPairs}] whenever no surface defects are present) is destroyed for arbitrarily large repulsive (quasi) mean-field interactions if the number of surface defects is large, i.e., of order larger than $L$. 
		
		In addition, Lemma~\ref{ProofLemma} implies that the condensate in the bulk is destroyed for arbitrarily large pair densities $\rho >0$ given the interaction strength $\lambda > 0$ is small enough or the surface tension $\alpha \geq 0$ large enough.
	\end{rem}
	Finally, we obtain the following result.
	\begin{theorem}[Reconstruction of the condensate] If $\delta,\lambda > 0$ then there is a critical pair density $\rho_{crit}=\rho_{crit}(\beta,\delta,\alpha,\lambda) > 0$ such that for all pair densities $\rho > \rho_{crit}$ one has
		\begin{equation*}
		\lim_{L \rightarrow \infty}\frac{1}{L}\frac{1}{e^{\beta(E_0(L)-\mu_L)}-1} > 0 \ .
		\end{equation*}
	\end{theorem}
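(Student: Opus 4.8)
The plan is to prove the statement by a density-balance (saturation) argument: both the surface defects and the excited bulk states can store only a bounded density as long as the limiting chemical potential satisfies $\mu\le E_0$, so for $\rho$ large the surplus is forced into $\varphi_0$. By \eqref{FormulaParticleDensityII} the quantity in the statement is exactly $\rho_0=\lim_{L\to\infty}\rho_0(\mu_L,L)$, hence it suffices to produce a finite $\rho_{crit}$ with $\rho_0>0$ for $\rho>\rho_{crit}$. Passing to the limit in \eqref{EquationI} as encoded by \eqref{EquationTheorem} and \eqref{EquationLimitDensity}, one has the exact balance
\begin{equation*}
\rho=\rho_{exc}(\beta,\mu)+\frac{1}{\delta}\rho_s^{\infty}+\rho_0\ ,\qquad \rho_s^{\infty}:=\lim_{L\to\infty}\rho_s(\mu_L,L)\ ,
\end{equation*}
with $\mu\le E_0$, and I would read off a lower bound for $\rho_0$ by bounding the first two terms from above.

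The first step is a uniform bound $\rho_s^{\infty}\le\bar\rho_s<\infty$ valid whenever $\mu\le E_0$, and here the mean-field shift is essential. By \eqref{RelationChemicalPotential} the effective surface ground-state energy is $\lambda\rho_s^{\infty}-\alpha$ and $\mu\le\lambda\rho_s^{\infty}-\alpha$. If the surface condensate is present this inequality is saturated, so $\mu=\lambda\rho_s^{\infty}-\alpha\le E_0$ gives immediately $\rho_s^{\infty}\le(E_0+\alpha)/\lambda$. If it is absent, $\rho_s^{\infty}$ is the thermal solution of the self-consistency equation obtained from \eqref{SurfaceDensityII} in the limit; since its right-hand side is strictly decreasing in $\rho_s^{\infty}$ (increasing $\rho_s^{\infty}$ widens the gap $\lambda\rho_s^{\infty}-\alpha-\mu$) and increasing in $\mu$, the fixed point is finite and largest at $\mu=E_0$. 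Either way $\rho_s^{\infty}\le\bar\rho_s$ with $\bar\rho_s=\bar\rho_s(\beta,\alpha,\lambda)<\infty$.

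The second step records that $\rho_{exc}(\beta,\mu)$ is nondecreasing in $\mu$, so $\rho_{exc}(\beta,\mu)\le\rho_c:=\rho_{exc}(\beta,E_0)$, and that $\rho_c<\infty$ --- this is precisely the finiteness of the bulk critical density underlying the condensation result [Theorem~3.3,~\cite{KernerElectronPairs}]. Inserting both bounds into the balance gives
\begin{equation*}
\rho_0\ \ge\ \rho-\rho_c-\frac{1}{\delta}\bar\rho_s\ ,
\end{equation*}
so with $\rho_{crit}:=\rho_c+\tfrac1\delta\bar\rho_s$ (finite since $\delta,\lambda>0$, and depending on $\beta,\delta,\alpha,\lambda$) one obtains $\rho_0>0$ for every $\rho>\rho_{crit}$, which is the assertion. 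Morally this is the strict converse of Lemma~\ref{ProofLemma}: crossing $\rho_{crit}$ makes the destruction hypothesis \eqref{EquationConditionTheorem} fail for all $\nu>1$, and the balance upgrades that failure to $\rho_0>0$.

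The main obstacle I expect is the finiteness of $\rho_c=\rho_{exc}(\beta,E_0)$: the excited-state sum in \eqref{EquationTheorem} carries the one-dimensional centre-of-mass continuum sitting on top of the transverse ground mode, and its convergence up to $\mu=E_0$ is the delicate point, which must be imported from formula~(3.4) of \cite{KernerElectronPairs}. A secondary difficulty is making the surface bound of the first step rigorous: one has to control the self-consistent map \eqref{SurfaceDensityII} together with the path-graph density of states near the spectral bottom $0$, and to verify that the surface condensate, when present, cannot store more than $(E_0+\alpha)/\lambda$ because $\mu\le E_0$. Finally, some bookkeeping is needed to pass from the subsequences along which $\mu_L\to\mu$, $\rho_s(\mu_L,L)\to\rho_s^{\infty}$ and $\rho_0(\mu_L,L)\to\rho_0$ exist to a statement in the full thermodynamic limit.
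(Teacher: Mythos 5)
Your proposal is correct and follows essentially the same route as the paper: the paper likewise bounds the surface density uniformly via the self-consistency equation \eqref{SurfaceDensityII} together with $\mu\le E_0$, and then plays this off against the balance \eqref{EquationLimitDensity} to force $\rho_0>0$ for large $\rho$ --- you merely run the argument directly (extracting an explicit $\rho_{crit}=\rho_c+\tfrac{1}{\delta}\bar\rho_s$) where the paper argues by contradiction. Both versions rest on the same imported input, namely the finiteness of $\rho_{exc}(\beta,E_0)$ taken from \cite{KernerElectronPairs}, which you correctly identify as the delicate point.
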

	\begin{proof} Assume to the contrary that such a critical pair density doesn't exist. Then there exist arbitrarily large $\rho > 0$ for which $\rho_0=0$. 
		
		Now, in a first step pick such a $\rho$, for given values $\beta,\delta,\alpha,\lambda > 0$, so large that
		\begin{equation}\label{EQ1}
		\tilde{\rho}(E_0,\delta) > 2\frac{E_0+\alpha}{\lambda}\ .
		\end{equation}
		Using the same reasoning as in the proof of Lemma~\ref{ProofLemma} one then concludes that 
		\begin{equation}\label{EQ2}
		\mu \leq E_0
		\end{equation}
		for the limit point of the associated sequence of chemical potentials, taking Proposition~\ref{PropInfimumSpectrum} into account.
		
		In a second step we use~\eqref{EQ1} and~\eqref{EQ2} in~\eqref{SurfaceDensityII} to conclude, for such $\rho$, the existence of a constant $C > 0$ such that 
		\begin{equation}
		|\rho_{s}(\mu_L,L)| < C
		\end{equation}
		for all $L\geq L_0$, $L_0$ large enough. Finally, increasing $\rho$ even more then yields a contradiction with~\eqref{EquationLimitDensity} and consequently the statement.
	\end{proof}
	\begin{rem} We remark that the results in this section hold for an arbitrary sequence $(e_n)_{n \in \mathbb{N}} \subset \mathbb{R}_+$ of edge weights. 
	\end{rem}
	%






{\small
\bibliographystyle{amsalpha}
\bibliography{Literature}}

\end{document}